
\documentclass[aps,preprint,groupedaddress]{revtex4-1}



\newcommand{\upperng}{\Omega_{\textrm{NG}}}
\newcommand{\upperu}{\Omega_{\textrm{U}}}
\newcommand{\fr}{\frac{\textrm{spikes}}{\textrm{second}}}

\usepackage{amsmath} 
\usepackage{amsthm} 

\theoremstyle{plain}
\newtheorem{theorem}{Theorem}[subsection]
\newtheorem*{theorem*}{Theorem}
\newtheorem{lemma}[theorem]{Lemma}
\newtheorem*{lemma*}{Lemma}

\theoremstyle{definition}

\newtheorem*{definition*}{Definition}

\newtheorem{example}[theorem]{Example}

\theoremstyle{remark}

\usepackage{graphicx}

\usepackage[margin=1.0in]{geometry}

\begin{document}

\title{Optimizing state change detection in functional temporal networks through dynamic community detection}

\author{Michael Vaiana}
\affiliation{Department of Mathematics and CDSE Program, University at Buffalo -
- SUNY, Buffalo, NY 14260}
\author{Sarah F. Muldoon}
\email{smuldoon@buffalo.edu}
\affiliation{Department of Mathematics and CDSE Program, University at Buffalo -
- SUNY, Buffalo, NY 14260}

\begin{abstract}
{Dynamic community detection provides a coherent description of network clusters over time, allowing one to track the growth and death of communities as the network evolves.  However, modularity maximization, a popular method for performing multilayer community detection, requires the specification of an appropriate null model as well as resolution and interlayer coupling parameters. Importantly, the ability of the algorithm to accurately detect community evolution is dependent on the choice of these parameters. In functional temporal networks, where evolving communities reflect changing functional relationships between network nodes, it is especially important that the detected communities reflect any state changes of the system.  Here, we present analytical work suggesting that a uniform null model provides improved sensitivity to the detection of small evolving communities in temporal correlation networks.  We then propose a method for increasing the sensitivity of modularity maximization to state changes in nodal dynamics by modeling self-identity links between layers based on the self-similarity of the network nodes between layers.  This method is more appropriate for functional temporal networks from both a modeling and mathematical perspective, as it incorporates the dynamic nature of network nodes.  We motivate our method based on applications in neuroscience where network nodes represent neurons and functional edges represent similarity of firing patterns in time.  Finally, we show that in simulated data sets of neuronal spike trains, updating interlayer links based on the firing properties of the neurons provides superior community detection of evolving network structure when group of neurons change their firing properties over time.}
{networks, modularity, community detection, multilayer, temporal}
\end{abstract}

\maketitle

\section{Introduction}
Networks are excellent data structures for capturing the pairwise interactions between a collection of objects, but when these interactions are dynamic, traditional network approaches require some form of data reduction to reduce the full range of interactions to a single summary.  As a result, the final network may or may not be a good representative of the underlying system.  A more suitable data structure for dynamic interactions is a \emph{temporal network} \cite{Holme2012,Kivela2014}, which is a time ordered collection of networks organized into layers together with \emph{interlayer edges} connecting nodes between layers. Temporal networks are increasingly being used to model time dependent properties of complex systems since no data loss is incurred as in the case of static networks \cite{bassett2015learningautonomy,valdano2015analytical,moinet2015burstiness,li2017fundamental,taylor2017eigenvector}. A particularly important class of temporal networks are functional networks
\cite{friston1994functional}, in which nodes represent a dynamic unit and edges are given by a measure of functional similarity (see Fig. \ref{fig:function_to_temporal}).  For example, in brain networks nodes may represent neurons and edges are measured via synchronization between temporal spiking activity patterns \cite{feldt2011dissecting}.

Since the nodes in a functional temporal network are themselves dynamic, the system may exhibit state changes where nodes switch between functional states.  Such a state change is depicted in Fig. \ref{fig:function_to_temporal} and occurs between time layers 2 and 3. These state changes effect the properties and structure of the network, and it is important to develop network measures that are maximally sensitive to such changes \cite{muldoon2018multilayer}.  Intuitively, one would expect interlayer edges to play a major role in temporal networks, as they model the strength of connections between nodes through time.  However, it is typically assumed that interlayer edges exist only between a node and itself in the next layer, representing a self-identity link through time, and more importantly, the standard assumption is that interlayer edge weights are all set to the same constant value through time \cite{Mucha2010,bassett2015learningautonomy,domenico2016mappingmultiplex, Bazzi2016}.  While the latter assumption has made temporal networks easier to deploy in practice (since tuning a single parameter for interlayer edge weights is significantly easier than tuning all possible edge weights individually), it is insufficient at properly modeling functional networks that exhibit state changes where the properties of a node change over time and the similarity of a node to itself in the previous layer is not always constant. 

One important property of a functional temporal network that is especially sensitive to state changes is the community structure \cite{fortunato2010community,mucha2010communities,vaiana2017multilayer}.  In temporal networks, communities can and do evolve through time, and nodal state changes can have drastic effects on this evolving community structure.  A popular method of community detection is modularity maximization \cite{Newman2004, Newman2006}, and this method has been extended to temporal networks \cite{Mucha2010}.  
While multilayer modularity maximization has been shown to provide a relatively efficient and effective description of dynamic community structure \cite{bassett2015learningautonomy,lee2016time,Yang2016,
weir2017post,muldoon2018locally} the multilayer modularity function has three parameters that must be specified: the resolution parameter, the interlayer coupling parameter, and the null model.  While some work has investigated how to choose these parameters in certain settings \cite{bassett2013robust,Bazzi2016,weir2017post}, how to optimally select these parameters remains an active area of research.  In this paper, we present work guiding the selection of these important parameters in order to make the function maximally sensitive to detecting state changes in dynamic community structure.  

First, expanding upon recent work that identifies an upper bound on the interlayer coupling parameter that limits the ability of multilayer modularity to detect certain changes in community structure across layers \cite{vaiana2018resolution}, we show that for functional temporal networks whose edge weights are bounded above by 1, using a uniform null model (as opposed to the commonly used Newman-Girvan null model) is preferable for detecting evolving communities.  We then propose a method to set interlayer edge weights in functional temporal networks based on a measure of temporal self-similarity.  In accordance with intuition, we lower the interlayer edge weight between a node and itself at the next time point when the functional similarity between that node and itself is low.  This gives a principled method of setting interlayer edge weights that further increases the sensitivity of modularity to detecting community changes.  Finally, we show how performing an essentially parameter free consensus, in combination with the methodology proposed above, gives nearly as accurate community detection results than those found with optimal choice of parameters when applied to simulated data representing neural activity with an implanted evolving community structure containing multiple state changes.

\begin{figure}[!ht]
\centering\includegraphics[width=4.5in]{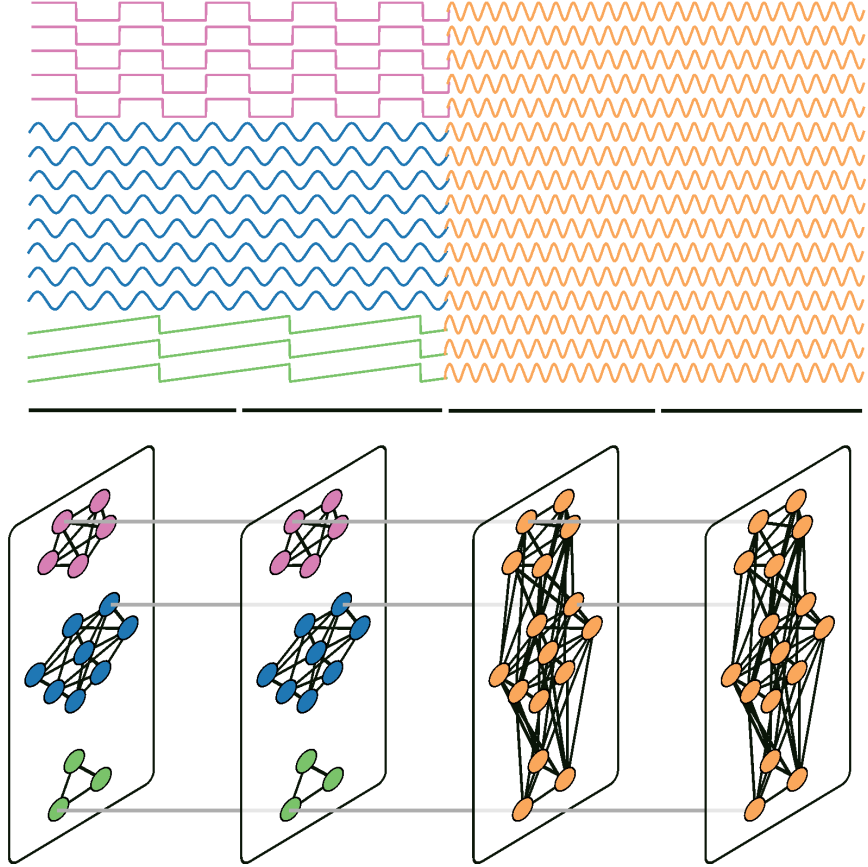}
\caption{\textbf{Functional Temporal Network.} The mapping of the activity of 16 functioning units to a temporal network.  Color indicates community assignment.  At the top there are 16 functioning nodes which initially are partitioned into 3 groups each with a unique function.  The system then undergoes a state change and the activity of the nodes synchronizes.  Black bars indicate windows in which a measure of similarity between function is measured.  The bottom is the temporal network with nodes representing the functioning units and edges representing the similarity in function in the given time window.  For visual clarity only 3 interlayer edges are drawn. Intuitively, the interlayer edge weight between layers 2 and 3 should be low to reflect the state change.}
\label{fig:function_to_temporal}
\end{figure}

The paper is organized as follows.  In Sec. \ref{sec:comm:review}, we briefly review temporal networks and the temporal modularity function.  In Sec. \ref{sec:parameter}, we discuss parameter selection for the modularity function and present our method of setting interlayer edge weights.  In Sec. \ref{sec:exp-results}, we test our method on simulated data and show that it gives improved sensitivity to the modularity function.

\section{Communities in Temporal networks}
\label{sec:comm:review}
\subsection{Temporal Networks and Notation}
A temporal network is a network $\mathcal{M} = (V,w,l)$ with $V$ the vertices, $w:V\times V \to R$ a weight function on the edges, and $l:V\to N\times T$ a labeling function.  The function $l$ identifies a vertex $v$ with a node-layer pair, that is $l(v) = (i,t)$ indicates that $v$ is the node $i$ at time $t$ and we denote this as vertex $i_t.$  The edges are implicitly defined via $w$ so that $w(i_s,j_t)$ is the weight of the edge between $i_s$ and $j_t.$ The edge between $i_s$ and $j_t$ is called an \emph{intralayer edge} if $s=t$ and an \emph{interlayer edge} if $s\neq t.$  We assume the network is diagonal and ordinal meaning there is no interlayer edge between $i_s, j_t$ unless $i=j$ and $s=t\pm 1.$ 
We let $A_{ijt}$ be the intralayer edge between node $i_t$ and $j_t$ so $A_{ijt}$ represents the adjacency matrix at time $t.$  Similarly, we let $B_{it}$ be the interlayer edge between node $i_t$ and $i_{t+1}.$

\subsection{Modularity in Temporal Networks}
Let $\mathcal{T}$ be a temporal network with $L$ many time layers. A community in a temporal network, $C$, is any subset of the vertices of the network, and importantly, communities can span multiple layers through time.  Let $P=\{C_1, C_2, \ldots C_k\}$ be a partition of the vertices of the network where each $C_i$ is a community.  The modularity function measures how well a given partition, $P$, captures the underlying community structure of the network.  It takes three parameters: an intralayer resolution parameter, $\gamma$, an interlayer coupling parameter, $\omega$, a null network, $R$, and is given by 
\begin{equation}
Q(P) = \sum_{t}^L \sum_{ij}(A_{ijt} - \gamma R_{ijt})\delta(c_{i_t}, c_{j_t}) + \sum_{t}^{L-1}\sum_i 2\omega\delta(c_{i_t}, c_{i_{t+1}})
\end{equation}
where $c_{i_t}$ is the community assignment of $i_t$ in $P$ and $\delta$ is the Kroneker delta function \cite{Mucha2010}.  For now, we have made the common assumption that $B_{it}=\omega$, where $\omega$ is a constant value across all nodes and layers, but we will relax this assumption later. The partition that maximizes the modularity function, $Q,$ consists of the putative communities.

\section{Optimal Modularity Parameters in Temporal Correlation Networks} 
\label{sec:parameter}
\subsection{Multilayer Resolution Limit and Null Models}
\label{sec:sub:mln-res-and-null-models}
The temporal modularity function is highly sensitive to its three parameters, the resolution parameter, $\gamma,$ the interlayer coupling parameter, $\omega,$ and the null network, $R.$  It is important to understand which choice of parameters is suitable for different types of networks.  In \cite{Bazzi2016}, an argument is made that in correlation networks, the uniform null network (U) may be more suitable than the classic Newman-Girvan null network (NG).  In \cite{vaiana2018resolution}, it was shown that all three parameters are related through a multilayer resolution limit: an upper bound on the interlayer coupling parameter which determines when modularity can detect mergers of communities. Below, we show how this resolution limit gives additional theoretical grounds for preferring the U null model in temporal networks whose edge weights are bounded above by 1, making the choice of the U null model particularly preferable in temporal networks whose edges are measured as correlations.

Let $\mathcal{T}$ be a temporal network with $L$ many time layers and $N$ many nodes on each layer. Let $K$ be a subset of the $N$ nodes and  assume that at time $t$ the nodes of $K$ form  communities $C_1, C_2,\ldots, C_r$ and that at time $t+1$ these all merge into a single community.  For example, the three communities in Fig. \ref{fig:function_to_temporal} merge together in layer 3 of the temporal network.

It was shown in \cite{vaiana2018resolution} that the merger of the communities can not be detected by modularity if $\omega > \Omega_t$, where 
\begin{equation}
\label{eq:Omega}
\Omega_t =  \sum_{\substack{ij\in K \\ \delta(c_{i_{t}}, c_{j_t}) = 0}} \frac{1}{2\theta}(\gamma R_{ijt} - A_{ijt})
\end{equation}
and $\theta = |K| - |C_m|$ where $C_m$ is the largest of the communities.   Therefore, $\Omega_t$ acts as a type of multilayer resolution limit on detecting changes in community structure across layers.  We refer to $\Omega_t$ throughout as the upper bound on interlayer coupling or simply the upper bound. Notice that all quantities in Eqn. \ref{eq:Omega} are with respect to layer $t.$  For notational convenience, we drop the subscript $t$ when it is understood we are working with a fixed layer.

Specializing to a null network lets one compute $\Omega_t$ more explicitly.  Assume we are working with a fixed layer $t$, and in this layer there are $n$ many communities $C_1,\ldots C_n$ that merge together at time $t+1.$   Define $\kappa_i$ = $\sum_{j}A_{ij}$ to be the degree of node $i$ and $2m = \sum_i \kappa_i$. The NG null network is then given by $R_{ij} = \frac{\kappa_i\kappa_j}{2m}.$   The U null network is given by $R_{ij} = \langle A \rangle$ for all $i,j$ where $\langle A \rangle$ is the mean value of $A.$  Explicitly, we have $\langle A \rangle = \frac{2m}{|N|^2}.$

In \cite{vaiana2018resolution}, it was shown that for the NG null network, the upper bound is given by 
\begin{equation}
\label{eq:ng-upper}
\upperng = \frac{1}{2\theta}\left(\frac{\gamma}{2m}\sum_{i\neq j}d_id_j - \sum_{i} e_i\right)
\end{equation} 
where $d_i$ is the degree of community $C_i$, and $e_i$ is the part of the external degree of $C_i$ that connects to another community $C_j.$  Recomputing $\Omega$ with respect to the U null network gives 
\begin{equation}
\label{eq:u-upper}
\upperu = \frac{1}{2\theta}\left(\gamma\sum_{i\neq j}|C_i||C_j| - \sum_{i} e_i\right)
\end{equation}
where $|C_i|$ is the number of nodes in community $C_i.$ Interestingly, for the U null network, the term that contributes positively to $\Omega$ does not depend on the edges within the communities and instead only depends on number of nodes within 
the communities.  

We now give conditions on the community structure that guarantee $\Omega_{U} > \Omega_{NG}.$ When this inequality holds, the upper bound for the $U$ null network is larger than that of the $NG$ network, thus implying there is a wider range of $\omega$ for which the communities $C_1, \ldots C_n$ can be detected.  We first prove a simple lemma that lets us approximate $\Omega_{NG}$ with $\hat{\Omega}_{NG}$ where $\Omega_{NG} \leq \hat{\Omega}_{NG}$.

\begin{lemma}
\label{thm:ng_bound}
Let $\mathcal{T}$ be a temporal network such that communities $C_1, \ldots C_n$ on layer $t$ merge together on layer $t+1.$  Let $K$ be the nodes of the communities $C_1 \ldots C_n.$  Then 
\[\Omega_{NG} \leq \frac{1}{2\theta}\left(\gamma 2k - \sum_i e_i\right) \equiv \hat{\Omega}_{NG}\] 
where $2k = \sum_{ij\in K} A_{ijt}.$ 
\end{lemma}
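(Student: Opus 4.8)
The plan is to compare the explicit Newman--Girvan value in Eqn.~\ref{eq:ng-upper} with the claimed bound. Each of $\Omega_{NG}$ and $\hat{\Omega}_{NG}$ is $\frac{1}{2\theta}$ times an expression, the two expressions contain the identical term $-\sum_i e_i$, and the prefactor is positive since $\theta=|K|-|C_m|>0$ (the block $K$ is partitioned into $n\ge 2$ communities, so $C_m$, being just one of them, has $|C_m|<|K|$). Cancelling the shared term, the target inequality $\Omega_{NG}\le\hat{\Omega}_{NG}$ is equivalent to $\frac{\gamma}{2m}\sum_{i\neq j}d_id_j\le\gamma\cdot 2k$, i.e.\ (as $\gamma>0$) to the single scalar estimate $\sum_{i\neq j}d_id_j\le 2m\cdot 2k$. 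So the lemma reduces to this one bound.

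For it, I would rewrite the off-diagonal community sum as a square minus its diagonal, $\sum_{i\neq j}d_id_j=\bigl(\sum_i d_i\bigr)^2-\sum_i d_i^2$, and then discard the nonnegative correction $\sum_i d_i^2\ge 0$, leaving $\sum_{i\neq j}d_id_j\le\bigl(\sum_i d_i\bigr)^2$. It remains to recognize $\sum_i d_i$: the degree $d_i$ of community $C_i$ sums the node degrees inside the merging block $K$, so $\sum_i d_i$ is exactly the total intralayer edge weight carried by $K$, namely $2k$; this is also the normalization $2m$ appearing in $R_{ij}=\kappa_i\kappa_j/(2m)$ in this setting, so $\sum_i d_i=2m=2k$ (and in any case $2k\le 2m$). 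Combining, $\bigl(\sum_i d_i\bigr)^2=(2k)^2\le 2m\cdot 2k$, hence $\sum_{i\neq j}d_id_j\le 2m\cdot 2k$, which by the equivalence above gives $\Omega_{NG}\le\hat{\Omega}_{NG}$.

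The one step genuinely requiring care is the bookkeeping in the middle paragraph: keeping straight what is counted in the community degrees $d_i$, in $2m$, and in $2k$, so that both the identity $\sum_i d_i=2m=2k$ and the cancellation of the $e_i$ terms are airtight. Once that is pinned down, the estimate itself is a one-line matter of discarding a nonnegative square, and is exactly the multilayer-block analogue of the textbook bound $\sum_{i\neq j}\kappa_i\kappa_j\le(2m)^2$ behind the ordinary single-layer modularity resolution limit; I expect this bookkeeping, not any inequality, to be the only real obstacle. Finally, I would note that the gap $\hat{\Omega}_{NG}-\Omega_{NG}$ is governed by the discarded term $\sum_i d_i^2$, so $\hat{\Omega}_{NG}$ is a tight proxy for $\Omega_{NG}$ precisely when the block splits into many communities of comparable degree, which is the regime in which the resolution limit actually constrains detection.
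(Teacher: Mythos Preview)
Your argument is exactly the paper's: bound $\sum_{i\neq j} d_id_j\le\bigl(\sum_i d_i\bigr)^2=(2k)^2$ and then invoke $2k\le 2m$ to get $(2k)^2/(2m)\le 2k$. The only slip is the claimed identity $2m=2k$: in general $K$ is a proper subset of the layer, so $2m$ (the total degree over \emph{all} nodes) can strictly exceed $2k$; the paper uses only the inequality $2k\le 2m$---which is also what your parenthetical falls back on---so your chain of inequalities still stands as written.
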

\begin{proof}
Since $K$ is a subset of nodes of the network we have $2k \leq 2m$, and thus 
\begin{align}
\label{eq:ng-approx}
\upperng &= \frac{1}{2\theta}\left(\frac{\gamma}{2m}\sum_{i\neq j}d_id_j - \sum_{i} e_i\right)
\\ &\leq \frac{1}{2\theta}\left(\frac{\gamma}{2m}(2k)^2 - \sum_{i} e_i\right)
\\ &\leq \frac{1}{2\theta}\left(\gamma 2k - \sum_{i} e_i \right)
\\ & \equiv \hat{\Omega}_{NG}.
\end{align}
\end{proof}

\begin{theorem}
\label{thm:u_bigger}
Let $\mathcal{T}$ be a temporal network such that communities $C_1, \ldots C_n$ on layer $t$ merge together on layer $t+1.$  Let $K$ be the nodes of the communities $C_1 \ldots C_n$ and let $\langle K \rangle$  be the average edge weight of those nodes.  Then, $\Omega_{U} > \Omega_{NG}$ if 
\[ \langle K \rangle < 1 - \frac{\sum_{i} |C_i|^2}{|K|^2} .\]
\end{theorem}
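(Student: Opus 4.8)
The plan is to bypass the exact expression for $\Omega_{NG}$ in Eqn. \ref{eq:ng-upper} altogether and instead compare $\Omega_{U}$ against the coarser bound $\hat{\Omega}_{NG}$ furnished by Lemma \ref{thm:ng_bound}. Since that lemma gives $\Omega_{NG} \le \hat{\Omega}_{NG}$, it is enough to establish the strict inequality $\Omega_{U} > \hat{\Omega}_{NG}$; chaining the two then produces $\Omega_{U} > \hat{\Omega}_{NG} \ge \Omega_{NG}$, which is the claim.

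First I would set the two quantities side by side. By Eqn. \ref{eq:u-upper} and Lemma \ref{thm:ng_bound},
\[
\Omega_{U} = \frac{1}{2\theta}\left(\gamma\sum_{i\neq j}|C_i||C_j| - \sum_i e_i\right),
\qquad
\hat{\Omega}_{NG} = \frac{1}{2\theta}\left(\gamma\, 2k - \sum_i e_i\right),
\]
where $2k = \sum_{ij\in K} A_{ijt}$. Both expressions carry the same positive prefactor $\frac{1}{2\theta}$ (note $\theta = |K|-|C_m| > 0$, since at least two communities merge) and the same additive term $-\sum_i e_i$. Hence, using $\gamma>0$, the inequality $\Omega_{U} > \hat{\Omega}_{NG}$ is equivalent to the purely combinatorial statement $\sum_{i\neq j}|C_i||C_j| > 2k$.

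Next I would evaluate the two sides. Because $C_1,\ldots,C_n$ partition $K$, we have $\sum_i |C_i| = |K|$, so $\sum_{i\neq j}|C_i||C_j| = \big(\sum_i |C_i|\big)^2 - \sum_i |C_i|^2 = |K|^2 - \sum_i |C_i|^2$. For the right-hand side, I would rewrite $2k$ in terms of the mean intralayer edge weight restricted to $K$: with the same normalization used to define $\langle A\rangle = 2m/|N|^2$, one has $\langle K\rangle = 2k/|K|^2$, i.e. $2k = \langle K\rangle\,|K|^2$. Substituting, $\sum_{i\neq j}|C_i||C_j| > 2k$ becomes $|K|^2 - \sum_i|C_i|^2 > \langle K\rangle\,|K|^2$, and dividing through by $|K|^2>0$ yields precisely the hypothesis $\langle K\rangle < 1 - \sum_i |C_i|^2/|K|^2$. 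Running the equivalences in reverse then completes the argument.

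I do not anticipate a genuine obstacle here: once Lemma \ref{thm:ng_bound} is invoked, everything reduces to a short chain of substitutions and one elementary identity. The only points that need care are the sign conventions $\gamma>0$ and $\theta>0$, which guarantee that each reduction preserves the direction of the inequality, and the adoption of the normalization $\langle K\rangle = 2k/|K|^2$ consistent with $\langle A\rangle$, upon which the exact form of the stated bound depends; it is also worth recording that the strictness of the hypothesis is exactly what permits the conclusion $\Omega_{U} > \Omega_{NG}$ rather than just $\Omega_{U} \ge \Omega_{NG}$.
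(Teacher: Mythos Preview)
Your proposal is correct and follows essentially the same route as the paper: invoke Lemma~\ref{thm:ng_bound} to replace $\Omega_{NG}$ by the coarser $\hat{\Omega}_{NG}$, cancel the shared $\tfrac{1}{2\theta}$ and $-\sum_i e_i$ pieces, and then use the identities $2k=|K|^2\langle K\rangle$ and $\sum_{i\neq j}|C_i||C_j|=|K|^2-\sum_i|C_i|^2$ to reduce to the stated bound on $\langle K\rangle$. If anything, your write-up is slightly more careful than the paper's in tracking the signs of $\gamma$ and $\theta$ and in distinguishing the strict inequality $\Omega_U>\hat{\Omega}_{NG}$ from the non-strict $\hat{\Omega}_{NG}\ge\Omega_{NG}$.
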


\begin{proof}
Our aim is to show that $\Omega_{U} > \Omega_{NG}.$  By Lemma \ref{thm:ng_bound}, it will suffice to show that $\Omega_{U} > \hat{\Omega}_{NG} > \Omega_{NG}.$  We compute 
\begin{align}
\Omega_{U} - \hat{\Omega}_{NG} &= \frac{1}{2\theta}\left(\gamma \sum_{i\neq j}|C_i||C_j| - \sum_{i}e_i\right) - \frac{1}{2\theta}\left(\gamma 2k - \sum_{i} e_i\right) \\
&= \frac{\gamma}{2\theta}\left(\sum_{i\neq j} |C_i||C_j| - 2k\right)\\
&= \frac{\gamma}{2\theta}\left(|K|^2 - \sum_{i}|C_i| - |K|^2\langle K\rangle \right).
\label{eq:explain1}
\end{align}
Where to obtain the last equality we used the fact that $2k = |K|^2\langle K \rangle$ and that 
\[|K|^2 = \sum_{i,j}|C_i||C_j| = \sum_{i\neq j} |C_i||C_j| + \sum_i|C_i|^2.\]  From the computation, we see that $\Omega_{U} > \hat{\Omega}_{NG}$ if and only if 
\[|K|^2 - \sum_{i}|C_i| - |K|^2\langle K\rangle  > 0 \]
and solving for $\langle K \rangle$ completes the proof.
\end{proof}

The content of the theorem says that if the average edge weight of the nodes that merge together is bounded by $ 1 - \frac{\sum_{i} |C_i|^2}{|K|^2}$, then we can conclude that $\Omega_{U} > \Omega_{NG}.$  In correlation networks, or any network whose edge weights are bounded by 1, the average edge value will also be bounded by 1.  Since $|K|^2 = \sum_{i\neq j} |C_i||C_j| + \sum_i|C_i|^2$, the quantity $\frac{\sum_{i} |C_i|^2}{|K|^2} \leq 1.$  How much less than one will depend upon the configuration of the communities that merge.  
\begin{example}
Assume the communities $C_1, \ldots, C_n$ from Theorem \ref{thm:u_bigger} all have the same size, $s.$  Then $\upperu > \upperng$ if $\langle K \rangle < 1 - \frac{1}{n}.$   We can see this by noting that $\sum_{i} |C_i|^2 = ns^2$ and $|K|^2 = (ns)^2$, and thus by the theorem, 
\[\langle K \rangle < 1 - \frac{ns^2}{(ns)^2} = 1 - \frac{1}{n}.\]
\end{example}
\noindent Therefore, in this example, if the average edge weight of the nodes that merge together is less than $1 - \frac{1}{n}$, it is preferable to use a uniform null network model.

There are a few final points that need to be made clear.  First, the converse of Theorem \ref{thm:u_bigger} does not hold, that is if $\langle K \rangle >1 - \frac{\sum_{i} |C_i|^2}{|K|^2}$ then nothing can be said about which bound, $\upperu$ or $\upperng$, is larger.  Second, the average edge weight, $\langle K \rangle$, is the average weight of the edges taken over all nodes in $K.$  Since we are assuming there are multiple communities in $K$, the connections between communities will necessarily be weak.  Thus, even in networks whose edge weights are bounded above by one, we expect the average value of the edge weights to be much lower for most community structures. It should also be noted that ensuring $\Omega_{U} > \Omega_{NG}$ does not itself mitigate the problem of the multilayer resolution limit.  However, it does provide a wider range of possible values of $\omega$ for which $\omega < \Omega$.

\subsection{Dynamic Interlayer Edges}

\label{sec:sub:dynamic-interlayer-edges}
Community detection in functional temporal networks presents several challenges.  We would like to model the underlying system in a way that is maximally sensitive to changes in state or behavior.  The first step in improving such sensitivity is to leverage the interlayer edges by letting them reflect a measure of self-similarity in nodes instead of being held constant (i.e., we will no longer assume $B_{it}=\omega$).  Such a measure of self-similarity is often available in functional networks, and we assume that the self-similarity of node, $i_t$, to itself one time step later, $i_{t+1}$, is given by $s(i_t, i_{t+1}).$ Practically, one would define this measure based on some sort of nodal feature that quantifies nodal dynamics and that that differs in different states of the system.

Our method is to first choose $\omega_{\textrm{global}}$ and set $B_{it}=\omega_{\textrm{global}}$ for all $i$ and $t$ (recall $B_{it}$ gives the interlayer edge between $i_t$ and $i_{t+1}).$   We then update $B_{it}$ on a node-by-node, layer-by-layer basis, depending on the value of $s(i_t, i_{t+1}).$  If we let $\rho \leq 1$ be a percentage, then our method is summarized as 
\begin{enumerate}
\item Choose $\omega_{\textrm{global}}$ and set $B_{it} =\omega_{\textrm{global}}$ for all $i$ and $t$. 
\item For each $i$ and $t$ if $s(i_t, i_{t+1})$ is small, then set $B_{it} = \rho\omega_{\textrm{global}}$.
\end{enumerate}

We refer to this process as \emph{updating interlayer edges}. The meaning of $s(i_t,i_{t+1})$ being too small is application dependent.  In our computational application in Section \ref{sec:exp-results}, we choose a threshold $\tau$ and say that if $s(i_t, i_{t+1}) < \tau$, then we will update the interlayer edges. However, any function of the self-similarity value can be used to determine when to update edges.

\subsection{Parameter Consensus}
\label{sec:sub:practical}
Maximizing the modularity function is computationally infeasible so in practice one uses a heuristic, for example the louvain algorithm \cite{Blondel2008,Jutla2011}. It has been shown that the modularity landscape has a large number of locally optimal partitions \cite{Good2010} which can correspond to significantly different community assignments.  To combat this dilemma, one can run a heuristic many times and form a consensus partition over the different runs \cite{bassett2013robust,lancichinetti2012consensus}. The idea is that the true core structure should be detected by the algorithm over most of the runs while weaker structure may only get detected a few times.  Using a consensus can then determine the structure that is consistently found.

There are several different approaches to applying a consensus algorithm \cite{bassett2013robust,Sarzynska2014,lancichinetti2012consensus,
jeub2018multiresolution,seifi2013stable,weir2017post}, but here we focus on methods that allow us to incorporate our method of updating interlayer edges. Because the algorithm for maximizing modularity is stochastic, one may choose a fixed pair of parameters $(\gamma_0, \omega_0)$ and run the algorithm many times with this fixed choice.  Alternatively, one can chose a set of pairs $\{(\gamma_0, \omega_0), (\gamma_1, \omega_1), \cdots (\gamma_n, \omega_n)\}$ and run the algorithm once for each of these pairs.  In this formulation, the resulting structure one obtains will be consistently found over different spatial resolutions (due to varying $\gamma$) and temporal resolutions (due to varying $\omega$).

To understand the effects of our method of updating interlayer edges, we run community detection in four different ways.  First, we run a fixed consensus, i.e., is we choose a fixed $(\gamma, \omega)$, and we run a consensus on these parameters without updating interlayer edges. We refer to this a \emph{fixed consensus}.  Next, we choose a grid of points in the $\gamma, \omega$ plane and run a consensus over these parameters.  We refer to this as a \emph{sweep consensus} since we are sweeping the parameter space over some grid.  We then run both the fixed and sweep consensus but in addition we update interlayer edges according to our method in Section \ref{sec:sub:dynamic-interlayer-edges}, and we refer to this as \emph{fixed with updates} and \emph{sweep with updates} respectively.

\section{Optimal parameter selection in dynamic neuronal data}
\label{sec:exp-results}
We now apply our proposed methodology to simulated neural spike train data with multiple embedded state changes to show that the use of a uniform null network model and dynamically updating interlayer edges gives superior performance to traditional methods for detecting evolving community structure. We stochastically generate 100 synthetic temporal networks with the same planted community structure and attempt to recover these communities via multilayer modularity maximization. Since our networks are created synthetically, we have access to the true community structure.  To compare the results of a community consensus with the true community structure we measure the Normalized Mutual Information (NMI) between the two partitions \cite{Danon2005}. The NMI is bounded between 0 and 1 with where two partitions which match perfectly will have NMI equal to 1 and partitions which disagree tend to 0.

\subsection{Network Creation}
\label{sec:net-creation}
The networks are created by measuring correlations between the activity of synthetic neurons (Fig. \ref{fig:spike_networks}).  The neural activity is generated with two goals in mind.  First, we need to create a structure with state changes in order to test our method of updating interlayer edges.  We are also motivated by brain activity  
in which neurons are sequentially recruited into a highly active and synchronized state, such as might occur in seizure dynamics.  We would therefore like our network to model this type of spreading activity. To accomplish these goals, we generate a network in which a single community of highly active neurons sequentially merges with other low activity neurons.

\begin{figure}[!ht]
	\begin{center}
	\includegraphics[width=.6\textwidth]{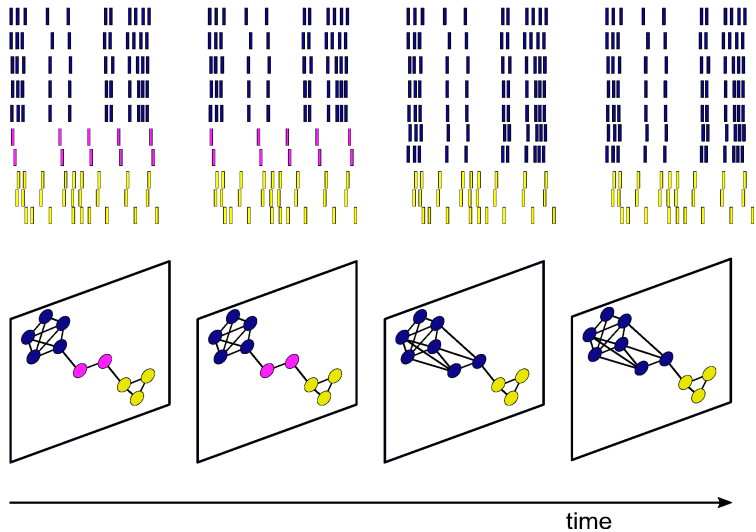}
	\caption{\textbf{Mapping Spike Trains to Temporal Networks}.  \textit{Top}.  Spike train data for 10 neurons.  The rows correspond to the activity of the neurons, and a bar represents a time point at which the neuron was active or `spiked'.  \textit{Bottom}.  A temporal network created by measuring correlations in activity between spikes of neurons. Color indicates community assignments based on similarity in firing activity.}
	\label{fig:spike_networks}
	\end{center}
\end{figure}

Specifically, we generate 12 seconds of activity for 100 neurons via a Poisson process \cite{Feldt2009}.  The Poisson process requires a firing rate parameter which controls the number of spikes per second a neuron will fire on average.  The firing rate parameter for each neuron is initially set to 10 $\fr.$  The network is created in such a way that each neuron undergoes a state change at which point the firing rate increases to 30 $\fr$ and the neuron synchronizes its firing pattern with all other neurons with this increased firing rate.  In the first 1-2s of activity, all neurons are generated with a firing rate of 10 $\fr.$  In seconds 2-3, 10 of the 100 neurons undergo the state change (their firing rate parameter is increased and their activity becomes synchronized).  This synchronization is achieved by generating a single master spike train and then using this train to generate correlated trains.  The correlated trains are created by randomly jittering the positions of spikes (within a 5ms window) and randomly deleting 10\% of the spikes.  In time 3-4s, another 10 neurons undergo a state change and all synchronize with the previous 10 synchronized neurons, forming a group of 20 neurons with an increased firing rate and synchronous activity.  This process continues until seconds 10-12 where all neurons are synchronized and remain this way. We build a functional network by measuring correlations between the neural activity of each neuron in a given time window. The choice of time window will affect the correlations measured within that time window and can have an impact on the results of community detection \cite{Telseford2016}.  To account for this, we choose three time windows of size 1, 1.5, and 2 seconds, respectively.  These choices represent different levels of mismatch between the window and the underlying structure.  We call the 1 second window `matching', the 1.5 second window `disjoint' and the 2 second window `large' to indicate the level at which the window overlaps the dynamic structure of changes within the network.  See Fig. \ref{fig:intraedges} for an example of the intralayer adjacency matrices for the matching window case.

\begin{figure}[!ht]
	\begin{center}
	\includegraphics[width=\textwidth]{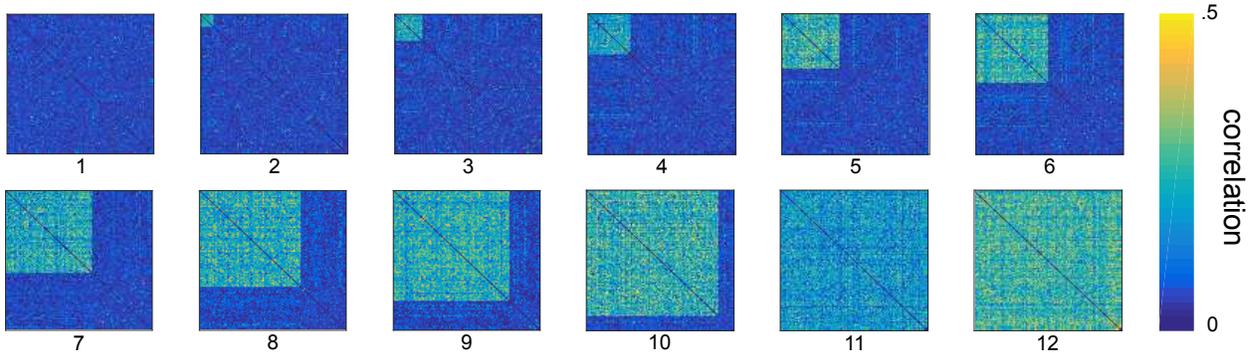}
	\caption{\textbf{Temporal Network Structure}. The adjacency matrices for each layer of a sample network with matching window size (1s). The $i,j$ value of each matrix is the correlation between the activity of neuron $i$ and neuron $j$ in the given time window. Notice that at each layer, 10 neurons join the growing synchronized community.  This network structure represents a sequence of mergers of communities.}
	\label{fig:intraedges}
	\end{center}
\end{figure}

\subsection{Parameter Sweep and Null Networks}
To assess the performance of our method on the simulated network, we attempt to find near optimal parameters ($\gamma$ and $\omega$) with which to compare our results.  To do this, we choose $(\gamma,\omega)$ in $[0,2]\times [0,2]$ discretized by a step size of .05.  This gives 1681 points in in the parameter plane, and for each of these points, we run a fixed consensus over 100 runs of the modularity maximization algorithm.  We do this for each of the 100 sample networks so that, for each sample, we can determine the parameters that give the highest NMI when using a fixed consensus (optimal parameters).  The parameters found this way are optimal in the sense that they maximize the NMI with respect to the ground truth, and thus represent the best choice of parameters. 

Note that in practice, doing such a sweep to find optimal parameters is not possible since one can not compare the output partition to the ground truth.  In addition, performing a fixed consensus (100 runs of the heuristic) over 1681 possible pairs of parameters is computationally expensive.  Thus, while we perform this task here for comparison, performing a parameter sweep to find the optimal parameters is not feasible option.

In Fig. \ref{fig:sweep_results}, we show the mean NMI over all 100 samples for each choice of parameter value and for each of the two null networks in consideration.  Notice that when using the $NG$ null network, there are two phases of the parameter plane separated by $\gamma \sim.9$ and that, in general, the NMI for the $NG$ network is lower than that of the $U$ network.

Although the exact values of optimal parameters varies  within the 100 sample networks, as seen in Fig \ref{fig:sweep_results}, these values are approximately within the range $\omega \in [.3,1]$ and $\gamma\in [.9,1.3]$ for the $U$ null network and $\omega \in (0,2], \gamma\in [0,.9]$ for the $NG$ network.

\begin{figure}[!ht]
\centering
\includegraphics[width=.8\textwidth]{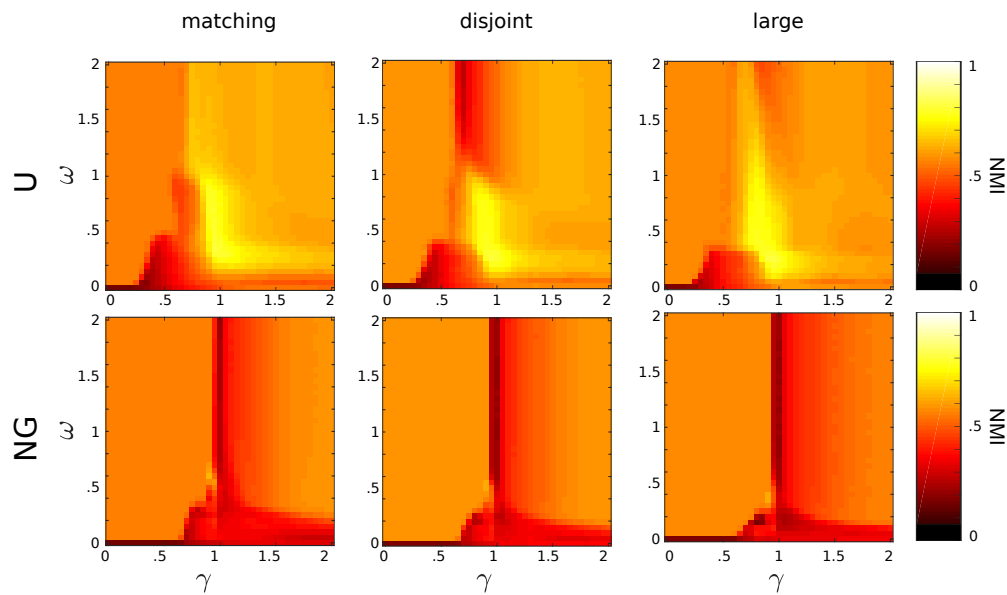}
\caption{\textbf{Modularity Parameter Landscape}. The mean NMI over all 100 samples for each choice of parameters. \emph{Top}. The mean NMI using the uniform null network for each window choice and for $\omega\in [0,2]$ and $\gamma\in [0,2].$ \emph{Bottom.} The mean NMI using the NG null network for each window choice and for $\omega\in [0,2]$ and $\gamma\in [0,2].$  Notice that the uniform null network has a region of high NMI for $.2 < \omega < 1$ and $\gamma \sim 1$ whereas the NG null network exhibits a sharp phase transition in when $\gamma \sim 1.$ }
\label{fig:sweep_results}
\end{figure}

\subsection{Comparing Community Detection Parameters}
\label{sec:sub:results}
Having found the optimal parameters for each of the 100 sample networks over all 3 time windows and each of the 2 null networks (U, NG), we can compare our method with the performance of the algorithm with these optimal parameters.  We run a fixed consensus with optimal parameters both with and without updates, and we also run a sweep consensus with and without updates.  

To perform updates, we first measure the firing rate for each neuron in each time window.  For a fixed pair of consecutive time windows, $t_1$ and $t_2$, for each neuron, we compute the difference in firing rate between these windows.  If the difference in firing rates is 2 or more standard deviations above or below the mean, we update the interlayer edge value between windows $t_1$ and $t_2$ according to the procedure in Section \ref{sec:sub:dynamic-interlayer-edges}.  Thus, our notion of similarity is based on population statistics: if a neuron's change in firing rate is significant relative to the population, we 
update its interlayer edge to reflect the decreased self-similarity between time layers.

In Fig. \ref{fig:violins}, we show the distribution of NMI for each of the 300 networks (100 samples and 3 time windows) and for both null networks.  It is immediately clear that the U null network outperforms the NG network on all fronts.  Notice that for the U null network, the sweep consensus with updates performs comparably to the fixed consensus with and without updates. Intuitively, one would expect the optimal parameters to significantly outperform the sweep consensus since the comparison is between parameters found to be optimal with an indiscriminate range of parameters.  Surprisingly, the sweep consensus together with updating interlayer edge weights gives similar performance to that with optimal parameters, indicating it may be a promising tool in practice when optimal parameters are not available.

\begin{figure}[!ht]
	\begin{center}
	\includegraphics[width=\textwidth]{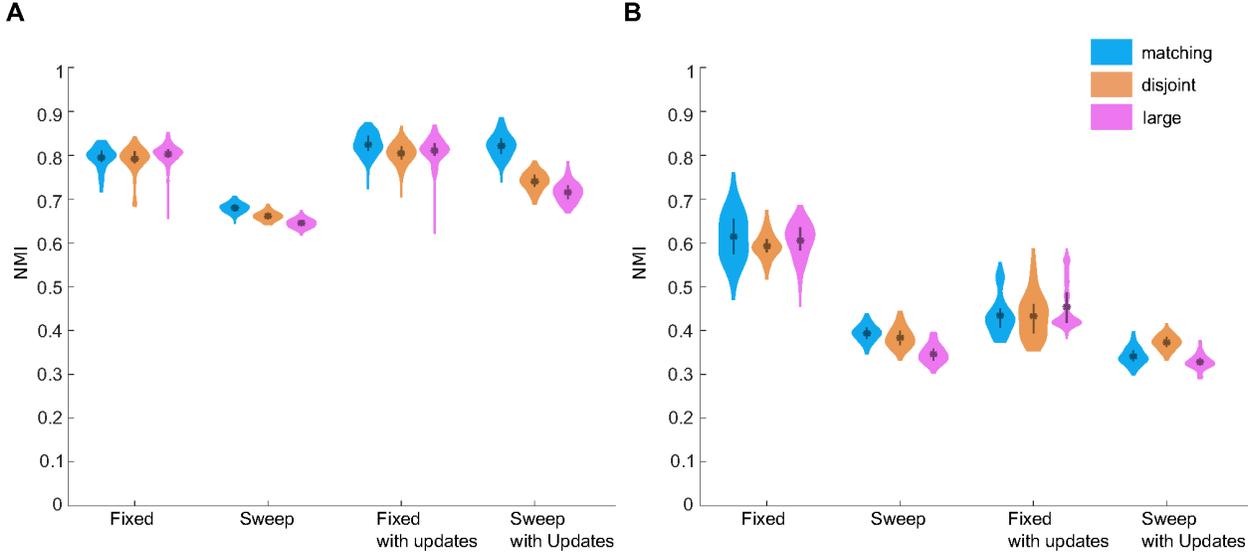}
	\caption{\textbf{NMI Distributions}. The normalized mutual information (NMI) over the 100 samples for each time window and each method of parameter selection. \textbf{A)} Results for the U null network. Notice that a sweep with updates performs nearly as well as a consensus over an optimal choice of parameters.  \textbf{B)} Results for the NG null network.  The NMI for the NG null network is fairly low across all methods.}
	\label{fig:violins}
	\end{center}
\end{figure}

To further quantify the performance of these methods, we show the dynamic community structure found via each method in Figs. \ref{fig:all_results_u} and \ref{fig:all_results_c}.  Since we have 100 sample networks, we only choose a single representative result which is chosen so that the NMI of the representative is close to the average NMI over all 100 samples. That is, the displayed community structure represents the average community detection performance.  We color the larger growing community in pink for better color differentiation.

\begin{figure}[!ht]
\centering
\includegraphics[width=.8\textwidth]{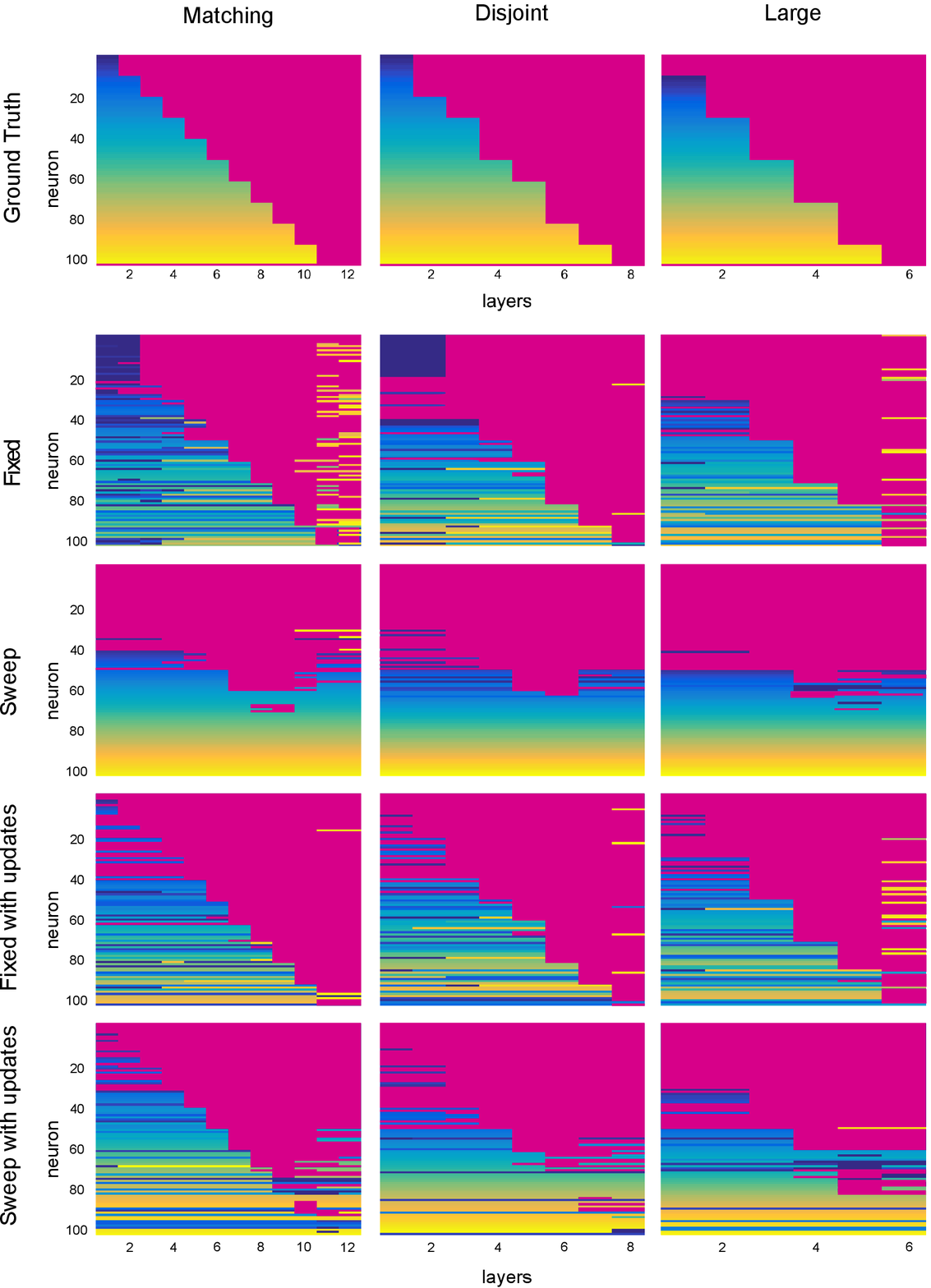}
\caption{\textbf{Average Community Detection for the U Null Network}. Representative partitions found by the the U null network for each method and time window.  The representative is chosen so that the NMI of the representative with the ground truth is close to the average NMI taken over all 100 samples. }
	\label{fig:all_results_u}
\end{figure}
\begin{figure}[!ht]
\centering
\includegraphics[width=.8\textwidth]{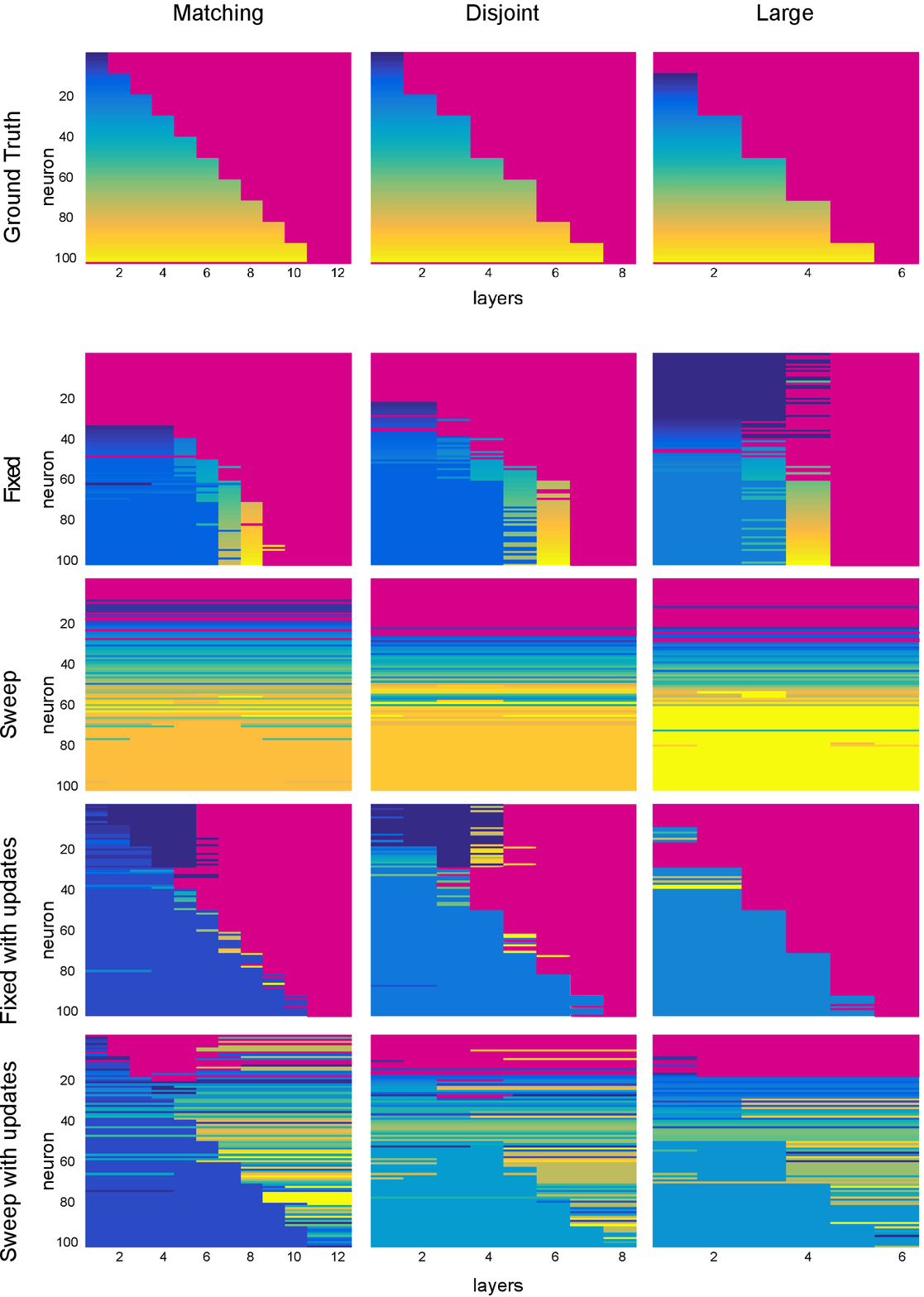}
\caption{\textbf{Average Community Detection for the NG Null Network}. Representative partitions found by the the NG null network for each method and time window.The representative is chosen so that the NMI of the representative with the ground truth is close to the average NMI taken over all 100 samples.}
	\label{fig:all_results_c}
\end{figure}

It should be made clear that the NMI, like all measures of similarity between partitions, is not perfect.  Although the NMI distributions are drastically lower for the NG null network, Figures \ref{fig:all_results_u} and \ref{fig:all_results_c} provide indications of what drives the differences. The NG null network tends to group the uncorrelated nodes into one large community where as the U null network does a better job of separating them.  This is likely driving the majority of the difference in the NMI between the two choices of null models.  For the fixed method with the NG network, one can clearly see the multilayer resolution limit take effect in the first and last time layers, where small community mergers can not be detected.  With the U null network, the resolution limit is not as drastic and is not present in the last layers of the network.  This is what we expect from the formula for $\upperu$ given in Eqn. \ref{eq:u-upper}.  Finally, we notice that for the NG null network and the sweep with updates method, the uncorrelated nodes are grouped into a single community while the larger community of correlated nodes are separated.  This is the exact opposite of the ground truth.  This phenonmenon is not present in the U null network, which gives very good performance with the sweep consensus with updates.  

 In the current study, we were only able to measure optimal parameters because we had access to the ground truth.  However, our main result suggests that using the U null network with a sweep consensus and updates gives nearly optimal performance. In most real-world settings, the ground truth is not known and one will not have access to the optimal parameters, but one may have access to a self-similarity measure for the nodes.  In this case, a sweep consensus together with interlayer updates would be appropriate.   
\clearpage

\section{Discussion and Conclusions}
Functional temporal networks are an import class of networks for modeling dynamic nodes, and the community structure in these networks can evolve through the temporal layers.  It is therefore important to develop methods that are sensitive to these community changes, especially when the system undergoes a state change.  In this paper, we addressed this issue with regards to the popular modularity maximization method for dynamic community detection. Our contribution is three-fold: we (i) gave a theoretical result on the difference in choice of two popular null modals for modularity maximization in functional temporal networks; (ii) introduced a method for setting interlayer coupling on a layer by layer basis; and (iii) showed how a minimal parameter consensus together with our method provides a robust method of community detection when applied to simulated neural data that exhibits state changes.

Although when using multilayer networks to model physical systems it is commonly assumed that interlayer coupling is constant, this leads to an inflexible model that does not necessarily reflect the dynamics of the system. Here, we have proposed a method that updates interlayer coupling values based on properties of nodal self-similarity between layers.  Our observation that the representation of interlayer links should be reflective of system states over time could of course be extended further, and all values of interlayer coupling could be derived based on experimental measures of nodal similarity over time.  We hope that the findings presented here will drive future work exploring more complicated and realistic methods for setting interlayer coupling values.  However, our results do suggest that even our simple method of updating interlayer coupling values provides increased sensitivity to the detection of state changes in functional temporal networks, and we encourage others to explore the use of this method in a wider array of dynamic networks where state changes are observed.

\begin{acknowledgments}
The authors would like to acknowledge support from the National Science 
Foundation (SMA-1734795).  The content is solely the responsibility of the autho
rs and does not necessarily represent the official views of the funding agency.
\end{acknowledgments}



%


%

\end{document}